\newif\iffinal
	\newtheorem{lemma}{Lemma}
\newenvironment{subalign}[1][]{\subequations{#1}\align}{\endalign\endsubequations}
\newcommand{\toself}[1]{\iffinal\else\textcolor{blue}{\{\textsc{to self:} #1\}}\fi}
\newcommand\newshortcut[2]{\@namedef{\detokenize{#1}}{#2}}
\newcommand\x[1]{\@nameuse{\detokenize{#1}}}
\renewcommand{\vb}[1]{\boldsymbol{#1}}
\newcommand{\ve}[1]{\vb{e}_{#1}}
\newcommand{\vg}[1]{\vb{\gamma}_{#1}}
\newcommand{\vs}[1]{\vb{\sigma}_{#1}}
\renewcommand{\ip}[2]{\left\langle#1,#2\right\rangle}
\newcommand{\GA}[1][]{\mathcal{G}_{#1}}
\newcommand{\rotor}[1]{\mathcal{#1}}
\newcommand{\linmap}[1]{\mathrm{#1}}
\newcommand{\vol}{\mathbb{i}}
\newcommand{\spanof}[2][]{\operatorname{span}_{#1}\qty{#2}}
\newcommand{\grade}[2][]{\left\langle#2\right\rangle_{#1}}
\newcommand{\rev}[1]{\widetilde{#1}}
\newcommand{\srev}[1]{\qty{\!\!\qty{#1}\!\!}}
\newcommand{\arev}[1]{\qty[\!\qty[#1]\!]}
\newcommand{\RR}{\mathds{R}}
\newcommand{\CC}{\mathds{C}}
\newcommand{\HH}{\mathds{H}}
\newcommand{\ZZ}{\mathds{Z}}
\newcommand{\NN}{\mathds{N}}
\newcommand{\RE}[1]{\mathfrak{R}\qty(#1)}
\newcommand{\IM}[1]{\mathfrak{I}\qty(#1)}
\DeclareMathOperator{\SO}{SO}
\DeclareMathOperator{\Spin}{Spin}
\newcommand{\so}{\mathfrak{so}}
\DeclareMathOperator{\diag}{diag}
\newcommand{\Co}{\mathrm{C}_}
\newcommand{\Si}{\mathrm{S}_}
\newcommand{\Ta}{\mathrm{T}_}
\newcommand{\bch}[2]{#1 \circledcirc #2}
\newcommand{\orcidicon}{
	\includegraphics[width=1em]{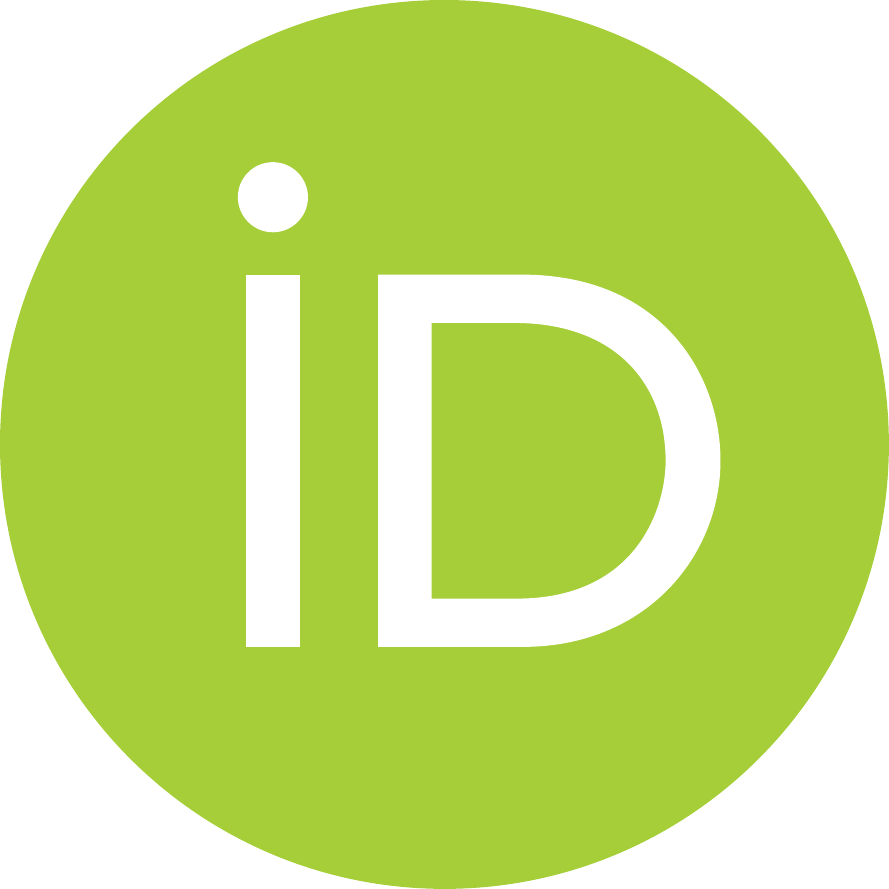}
}
\newcommand\orcidJoseph{{\href{https://orcid.org/0000-0002-4693-6539}{\orcidicon}}}
\newcommand\orcidMatt{{\href{https://orcid.org/0000-0003-1088-6485}{\orcidicon}}}
\title{Explicit \x{BCH full} formula for Spacetime via Geometric Algebra}
\author{Joseph Wilson \orcidJoseph}
\author{{\sf and} Matt Visser \orcidMatt}
\affiliation{
	School of Mathematics and Statistics, Victoria University of Wellington,
\\	\null\qquad PO Box 600, Wellington 6140, New Zealand
\\[2ex]	\toself{Compiled on \today\ at \currenttime.}
}
\emailAdd{joseph.wilson@sms.vuw.ac.nz}
\emailAdd{matt.visser@sms.vuw.ac.nz}
\abstract{
	We present a compact \x{BCH full} formula for the composition of Lorentz transformations $e^{\sigma{}_i}$ in the spin representation (\emph{a.k.a.} Lorentz rotors) in terms of their generators $\sigma{}_i$:
	\begin{align*}
		\ln(e^{\sigma{}_1}e^{\sigma{}_2}) =
		\tanh^{-1}\qty(\frac{
			\tanh \sigma{}_1 + \tanh \sigma{}_2 + \frac12[\tanh \sigma{}_1, \tanh \sigma{}_2]
		}{
			1 + \frac12\qty{\tanh \sigma{}_1, \tanh \sigma{}_2}
		})
	\end{align*}
	This formula is general to geometric algebras (\emph{a.k.a.}\ real Clifford algebras) of dimension $\leq{} 4$, naturally generalising Rodrigues' formula for rotations in $\RR^3$.
	In particular, it applies to Lorentz rotors within the framework of Hestenes' spacetime algebra, and provides an efficient method for composing Lorentz generators.
	Computer implementations are possible with a complex $2\times{}2$ matrix representation realised by the Pauli spin matrices.
	The formula is applied to the composition of relativistic $3$-velo\-cities yielding simple expressions for the resulting boost and the concomitant Wigner angle.
}
\begin{document}

\maketitle

\clearpage

\section{Introduction}
\label{sec:intro}

% Special relativity is geometry in a Lorentzian spacetime.
In studying proper Lorentz transformations, it is often easier to represent them in terms of their generators $\sigma{}_i$ belonging to the Lorentzian Lie algebra $\mathfrak{so}(1,3)$.
A fundamental question is how Lorentz transformations compose in terms of these generators: ``given $\sigma{}_1$ and $\sigma{}_2$, what is $\sigma{}_3$ such that $e^{\sigma{}_1}e^{\sigma{}_2} = e^{\sigma{}_3}$?''
This is useful theoretically and in practical applications, where representing transformations in terms of their generators is cheaper.
One may use the \x{BCH full} (\x{BCH}) formula $\bch{\sigma{}_1}{\sigma{}_2} :={} \ln(e^{\sigma{}_1}e^{\sigma{}_2})$ which is well studied in general Lie theory \cite{achilles2012bch-early}.
However, the general \x{BCH} formula
\begin{align}
	\bch{a}{b} = a + b + \frac12[a, b] + \frac{1}{12}[a, [a, b]] + \frac{1}{12}[[a, b], b] + \cdots
	\label{eqn:general-BCH-formula}
\end{align}
involves an infinite series of nested commutators and so is not immediately of practical use for the study of Lorentz transformations.
Some closed-form expressions for \eqref{eqn:general-BCH-formula} under the $2$-form representation of $\so(1,3)$ have been found \cite{coll2002sr-generator-composition,coll1990sr-generator-exponentiation}, but the expressions are complicated and do not clearly reduce to well-known formulae in, for example, the special cases of pure rotations or pure boosts.

We present a relatively simple closed-form \x{BCH} formula for orthogonal transformations in any space of dimension $\leq{}4$ within the framework of geometric algebra.
In the case of Lorentzian spacetime, we point out how the $2\times{}2$ complex linear representation enabled by the Pauli spin matrices provides an efficient method to numerically or symbolically compose Lorentz transformations in terms of their generators.
The formula is of sufficient simplicity to be of pedagogical interest, easily yielding standard results for the composition of relativistic $3$-velocities and the associated Wigner angle \cite{berry2020quat-sr,visser2011sr-velocity-composition,wigner1939unitary}.

\subsection{Geometric algebra: Historical context and motivation}

The basic ingredient of geometry in the context of classical physics and special relativity is a real vector space $V = \RR^n$ representing a (local) frame in physical space.
This space is equipped with a (possibly indefinite) vector inner product, or \emph{metric}\footnote{
	Mathematically, a \emph{symmetric bilinear form}.
}
\begin{align}
	\eta{}(\vb{u}, \vb{v}) \equiv{} \ip{\vb{u}}{\vb{v}} \equiv{} \eta{}_{ab}\,u^a\,v^b \in \RR	
\end{align}
which gives rise to notions like length and angle.
Denote by $\RR^{p,q}$ the real $(p + q)$-dimensional vector space of signature\footnote{
	That is, with metric $\eta{} \cong \diag(\overbrace{+1,\cdots+1}^p,\overbrace{-1,\cdots,-1}^q)$ in the standard basis.
	We shall not consider degenerate signatures, although these find use in computer graphics as \emph{projective geometric algebras} \cite{hestenes1991pga,vince2008ga-graphics} and in areas of general relativity involving light-like hypersurfaces \cite{poisson2002gr-null-shells,israel1966gr-hypersurfaces}.
} $(p, q)$.
The isometries of $\RR^{p,q}$ taken together form the orthogonal groups $\SO(p,q)$ and describe transformations between \emph{inertial observers} in that space.
These groups (along with their Lie algebras $\so(p,q)$, associated spin groups $\Spin(p,q)$ and linear representations) are central objects of study in relativistic and quantum physics.

Historically, there was tension regarding the best algebraic framework in which to express these geometric objects.
The vector algebra ``war'' of 1890--1945 saw Hamilton's quaternions $\HH$, once hailed as the optimal language for describing $\RR^3$ rotations, lose popularity in favour of Gibbs' 3-vectors.
For their elegant handling of $\RR^3$ rotations, many authors have tried coercing quaternions into $\RR^{1,3}$ for application to special relativity \cite{silberstein1912quat-sr,deleo1996quat-sr,dirac1944quat-sr}.
This has been done in various ways, usually by complexifying $\HH$ into an eight-dimensional algebra $\CC\otimes\HH$ and then restricting the number of degrees of freedom as seen fit \cite{berry2020quat-sr,berry2021quat-sr}.
However, it is fair to say that quaternionic formulations of special relativity never gained notable traction.
Today, the  physics community is most familiar with tensor calculus, differential forms and the Dirac $\gamma{}$-matrix formalism, and has relatively little to do with quaternions or quaternion-like algebras \cite{chappell2016quat-history,altmann1989quat-history}.

Arguably, this outcome of history is unfortunate, because both approaches --- tensorial and quat\-ern\-ionic --- possess some advantages over the other.
While quaternionic algebras describe rotations with maximal efficiency, they are clouded by a history of inconsistent interpretations and odd notational choices.
For instance, a notable defect in Hamilton's original presentation of his quaternions was that, because they naturally represent rotation operators, pure quaternions are actually \emph{bivectors} (\emph{i.e.}, ``axial'' vectors or ``pseudovectors'') rather than true (``polar'') vectors.
This leads to a mysterious negative Pythagorean norm and makes the interpretation of pure quaternions as $\RR^3$ vectors misleading \cite{chappell2016quat-history}.
On the other hand, while the usual tensor formalisms represent vectors, bivectors, and general tensors with great ease, their explicit algebraic description of rotations is prohibitively cumbersome.

It appears that geometric algebra, being a unified language for both vectorial objects and rotation operators, provides a comprehensive framework for geometry in physics \cite{chappell2016quat-history,doran2003ga,doran1994ga,lasenby2016ga-unified-language,hestenes1986ga-unified-language,gull1993sta}.
Geometric algebra intrinsically describes rotations as \emph{rotors} which exist in the \emph{spin representation}, leading to a great simplification of formulae involving rotations, and in particular, to a useful geometric \x{BCH} formula.

\subsection{Notations and terminologies in geometric algebra}

A self-contained primer on geometric algebra is given in appendix~\ref{apx:geoalg}.
See also \cite{hestenes2003sta,doran2003ga,lasenby2016ga-unified-language} for introductions aimed at a physics audience.
We mostly adopt the notations of \cite{doran2003ga}, denoting by $\GA(V, \eta{}) \equiv{} \GA(p, q)$ the geometric algebra (\emph{a.k.a.}~the real Clifford algebra) over a real vector space $V$ of dimension $p + q$ equipped with a metric $\eta{}$ of signature $(p, q)$.
Denote the \emph{subspace of grade $k$} by $\GA[k](p, q)$, and the \emph{even sub-algebra} by $\GA[+](p, q) \equiv{} \bigoplus_{k}\GA[2k](p, q)$.

Generic elements of $\GA(p,q)$ are called \emph{multivectors}, and homogeneous multivectors of a fixed grade $k$ are called \emph{$k$-vectors}.
Write the grade $k$ projection of $a$ as $\grade[k]{a}$.
Grade one vectors are denoted in boldface, `$\vb{v}{}$'; and even multivectors (usually rotors) are in script, `$\rotor R$'.
Finally, a multivector $A$ which is a sum of $k$-vectors for $k \in K$ is called a \emph{$K$-multivector} (e.g., the sum of a scalar and a $k$-vector is a $\qty{0, k}$-multivector).

We denote by $\rev{a}$ the \emph{reverse} of $a$ (where the order of the geometric product is reversed, as in $\rev{ab} = \rev{b}\rev{a}$), and the \emph{volume element} in $n$ dimensions by $\vol$.
In $\GA(p, q)$ generally, $\vol^2 = \pm 1$, but in Euclidean 3-space $\GA(3)$ and in spacetime $\GA(1,3)$, the pseudoscalar satisfies $\vol^2 = -1$ as suggested by its symbol.

\subsection{Bivectors, rotors and Lorentz transformations}

It is worth noting the relationship between orthogonal transformations of a vector space and their analogous description as \emph{rotors}, which belong to the double-cover of the orthogonal group (see \cite[\textsection{}\,11.3]{doran2003ga} and \cite{lasenby2016ga-unified-language,hestenes1986ga-unified-language}.)
The advantage of this additional formalism is hopefully clear: It leads to an elegant and unified\footnote{
	Rotors are `elegant' because: they are always of the form $\pm e^\sigma{}$ for a bivector $\sigma{}$ carrying clear geometric meaning \cite[\textsection{}\,11.3]{doran2003ga}; their description of rotations is free of gimbal lock, and; they can always be interpolated without ambiguity \cite{lasenby2011ga-practical}.
	The formalism is `unified' because rotors: eliminate the need for special treatment of spinors \cite{hestenes1986ga-unified-language}; are general to a space of any dimension or signature, and; act on objects of all grades via the single transformation law \eqref{eqn:double-sided-transformation-law}.
	Of mathematical significance: any finite Lie group is realised as a rotor group, and every Lie algebra as a set of bivectors \cite{lie-groups-as-spin-groups}.
} treatment of generalised rotations, including Lorentz transformations.

An orthogonal transformation in $n$ dimensions may be achieved by the composition of at most $n$ reflections.\footnote{This is the Cartan--Dieudonn\'{e}{} theorem \cite{cartan-dieudonne-theorem}.}
In geometric algebra, a multivector $A \in{} \GA(p,q)$ is reflected across the vector $\vb{v}{}$ by the map
\begin{align}
	A \mapsto{} -\vb{v}{}A\vb{v}{}
	\label{eqn:reflection}
\end{align}
where $\vb{v}{}^2 = \pm{}1$.
A parity-preserving orthogonal transformation $\linmap R \in{} \SO(p, q)$ is therefore achieved by an even number of reflections \eqref{eqn:reflection}, resulting in a \emph{double-sided transformation}
\begin{align}
	\linmap R : A \mapsto \rotor R A \rev{\rotor R}
	\label{eqn:double-sided-transformation-law}
\end{align}
where $\rotor R = \vb{v}{}_1\vb{v}{}_2\cdots \vb{v}{}_{2k} \in \GA[+](p,q)$ is a multivector product of vectors satisfying $\rotor R\rev{\rotor R} = \pm{}1$.
For concreteness, the matrix components of $\linmap R$ with respect to a basis $\qty{\ve a}$ may be obtained from the rotor by $\linmap R^a{}_b = \eta{}^{ac}\ip{\ve c}{\rotor R \ve b \rev{\rotor R}}$ where $\eta{}_{ab} = \ip{\ve a}{\ve b}$ raises and lowers indices.

Under the geometric product, these even multivector products form the \emph{spin group}
\begin{align}
	\Spin(p, q) :={} \qty{\rotor R \in \GA[+](p, q) \mid \rotor R\rev{\rotor R} = \pm{}1} \twoheadrightarrow \SO(p, q)
.\end{align}
To any transformation $\linmap R \in \SO(p, q)$, there are exactly two multivectors of $\Spin(p, q)$ which generate $\linmap R$ under \eqref{eqn:double-sided-transformation-law}, namely $+\rotor R$ and $-\rotor R$, making the spin group a double cover of $\SO(p,q)$, signified by $\Spin(p, q) \twoheadrightarrow \SO(p, q)$.
The even-dimensional linear representations of $\Spin(p, q)$ map to linear representations of $\SO(p, q)$, while odd-dimensional representations map to \emph{projective} representations of $\SO(p,q)$ (or as commonly known in physics, the \emph{spinorial} representations).
Thus, all the tensorial and spinorial transformation laws utilised in physics are realised by the representation theory of multivectors in the spin group.
When dealing with non-spinorial vector representations of $\SO(p, q)$, the overall sign of a rotor is redundant because it does not affect the associated orthogonal transformation.

The further restriction that $\rotor R\rev{\rotor R} = +1$ defines the identity-connected component\footnote{
	Except for the degenerate $(1 + 1)$-dimensional case, $p = q = 1$ \cite{doran2003ga}.
} of the spin group, called the \emph{rotor group}
\begin{align}
	\Spin^+(p, q) :={} \qty{\rotor R \in \GA[+](p, q) \mid \rotor R\rev{\rotor R} = +1}
	\twoheadrightarrow
	\SO^+(p, q)
,\end{align}
whose elements $\rotor R$ are called \emph{rotors}.
In Euclidean spaces, $\SO(n)$ is connected, and so there is no distinction between the spin and rotor groups --- but in mixed signature spaces, $\Spin(p,q) = \Spin^+(p, q) \times{} \ZZ_2$.
% .\footnote{
% 	E.g., in Minkowski spacetime, this is the Klein four-group $\ZZ_2^2$ generated by parity and time reversal.
% }
All rotors of $\Spin^+(p, q)$ are of the form $\rotor R = \pm e^\sigma{}$ for some bivector generator $\sigma{} \in{} \GA[2](p, q)$ \cite[\textsection{}\,11.3.3]{doran2003ga}.
Indeed, the subspace of bivectors $\GA[2](p, q)$ forms a Lie algebra under the commutator, with the exponential map $\sigma{} \mapsto{} e^\sigma{}$ sending bivectors to rotors in $\Spin^+(p, q)$.
This Lie algebra is isomorphic to the Lie algebra $\so(p, q)$ of the (special) orthogonal group.
Thus, the Lie algebraic description of generalised rotations by their generators is embedded in the bivector subspace $\GA[2](p, q)$ of the unified geometric algebra.

Conveniently, for (anti-)Euclidean spaces (where either $p$ or $q$ is zero) and the special case of Minkowski spacetime $\qty{p, q} = \qty{1, 3}$, every rotor $\rotor R \in{} \Spin^+(p, q)$ is of the form $\rotor R = e^\sigma{}$ (see \textsection{}\,\ref{sec:invariant-bivector-decomposition}).
We shall keep our main result general to $\leq{} 4$ dimensions, but with emphasis on the case of Minkowski spacetime $\RR^{1,3}$.
In this context, proper orthochronous Lorentz transformations $\linmap \Lambda{} \in \SO^+(1,3)$ are represented by rotors $e^\sigma{} \in \Spin^+(1,3)$, which are in turn generated by spacetime bivectors $\sigma{} \in \GA[2](1,3)$.
% A matrix representation of $\linmap \Lambda{}$ may be recovered by $\linmap \Lambda{}_{\mu{}\nu{}} = \grade[0]{\vg \mu{} e^\sigma{} \vg \nu{} e^{-\sigma{}}}$. \todo{Prove or cite.}

\section{A Geometric \x{BCH full} Formula}

Suppose $\sigma{} \in \GA[2](p, q)$ is a bivector in a geometric algebra of dimension $p + q \leq{} 4$.
By their definitions as formal power series, we have
\begin{math}
	e^{\sigma{}} = \cosh \sigma{} + \sinh \sigma{}
,\end{math}
where `$\cosh$' involves even powers of $\sigma{}$ and `$\sinh$' odd powers.
For convenience, define the linear projections onto \emph{self-reverse} and \emph{anti-self-reverse} parts respectively as
\begin{align}
	\srev{A} &= \frac12\qty(A + \rev{A})
&	&\text{and}
&	\arev{A} &= \frac12\qty(A - \rev{A})
	\label{eqn:rev-notation}
.\end{align}
Since any bivector obeys $\rev{\sigma{}} = -\sigma{}$, it follows that $\rev{e^\sigma{}} = e^{-\sigma{}} = \cosh \sigma{} - \sinh \sigma{}$.
Using the notation \eqref{eqn:rev-notation}, the self-reverse and anti-self-reverse projections of $e^\sigma{}$ are $\srev{e^{\sigma{}}} = \cosh \sigma{}$ and $\arev{e^{\sigma{}}} = \sinh \sigma{}$, respectively.
Furthermore, these two projections commute, and so
\begin{align}
	{\arev{e^{\sigma{}}}}{\srev{e^{\sigma{}}}}^{-1} =
	{\srev{e^{\sigma{}}}}^{-1}{\arev{e^{\sigma{}}}} = \tanh \sigma{}
\end{align}
which leads to an expression for the logarithm of any rotor $\rotor R = \pm e^\sigma{}$.
\begin{align}
	\sigma{} = \ln(\rotor R) = \tanh^{-1}\qty(\frac{\arev{\rotor R}}{\srev{\rotor R}})
	\label{eqn:log-rotor}
\end{align}
Note that the overall sign of the rotor is not recovered, and $\ln(+\rotor R) = \ln(-\rotor R)$ according to \eqref{eqn:log-rotor}.
Since $\pm \rotor R$ both represent the same transformation $\linmap R \in \SO^+(p, q)$, this does not affect vector representations, but becomes important when considering spinors.
The exact sign can be recovered by considering the relative signs of $\arev{\rotor R}$ and $\srev{\rotor R}$, as in \cite[\textsection{}\,5.3]{lasenby2011ga-practical}.

From this we may derive a \x{BCH} formula by substituting $\rotor R = e^{\sigma{}_1}e^{\sigma{}_2}$ for any two bivectors $\sigma{}_i \in \GA[2](p, q)$.
Using the shorthand $\Co{i} :={} \cosh \sigma{}_i$ and $\Si{i} :={} \sinh \sigma{}_i$, the composite rotor is
\begin{align}
	\rotor R = e^{\sigma{}_1}e^{\sigma{}_2}
	= (\Co1 + \Si1)(\Co2 + \Si2)
	= \Co1\Co2 + \Si1\Co2 + \Co1\Si2 + \Si1\Si2
.\end{align}
For $p + q < 4$, any even function of a bivector (such as $\Co{i}$) is a scalar, while for $p + q = 4$ the result is a $\qty{0,4}$-multivector $\alpha{} + \beta{}\vol$.
In either case, the $\Co{i}$ commute with even multivectors; $[\Co{i}, \Co{j}] = [\Co{i}, \Si{j}] = 0$.
Therefore, the self-reverse and anti-self-reverse parts are
\begin{align}
	\srev{\rotor R} &= \Co1\Co2 + \frac12\qty{\Si1, \Si2}
&	&\text{and}
&	\arev{\rotor R} &= \Si1\Co2 + \Co1\Si2 + \frac12\qty[\Si1, \Si2]
	\label{eqn:arev-and-srev-parts}
.\end{align}
Hence, from \eqref{eqn:log-rotor} we obtain an explicit \x{BCH} formula
\begin{align}
	\bch{\sigma{}_1}{\sigma{}_2}
	% \coloneqq \ln(e^{\sigma{}_1}e^{\sigma{}_2})
	= \tanh^{-1}\qty(\frac{
		\Ta1 + \Ta2 + \frac12\qty[\Ta1, \Ta2]
	}{
		1 + \frac12\qty{\Ta1, \Ta2}
	})
	\label{eqn:geometric-BCH}
\end{align}
where we abbreviate $\Ta{i} \coloneqq \tanh \sigma{}_i$.

We may wish to express \eqref{eqn:geometric-BCH} in terms of geometrically significant products instead of \mbox{(anti-)}commutators.
The geometric product of two bivectors $a$ and $b$ is generally a $\qty{0,2,4}$-multivector
\begin{align}
	ab = \grade[0]{ab} + \grade[2]{ab} + \grade[4]{ab}
.\end{align}
Employing the notation of Hestenes \cite{doran2003ga}, this may be written as
\begin{align}
	ab = a\cdot{}b + a\times{}b + a\wedge b
	\label{eqn:bivector-products}
,\end{align}
where here $a\times{}b = \grade[2]{ab} = \frac12(ab - ba)$ is the bivector \emph{commutator product}, and the scalar inner product $a\cdot{}b = \grade[0]{ab}$ is extended to bivectors.
We may then write a \x{BCH} formula in which the grade of each term is explicit:
\begin{align}
	\bch{\sigma{}_1}{\sigma{}_2} = \tanh^{-1}\qty(\frac{
		\Ta1 + \Ta2 + \Ta1 \times{} \Ta2
	}{
		1 + \Ta1 \cdot{} \Ta2 + \Ta1 \wedge{} \Ta2 
	})
	\label{eqn:geometric-BCH-products}
\end{align}
The numerator is a bivector, while the denominator contains scalar ($\Ta1\cdot{}\Ta2$) and $4$-vector ($\Ta1\wedge\Ta2$) terms.

% Computationally, evaluation of \eqref{eqn:geometric-BCH} boils down to normalizing $\sigma{}_1$ and $\sigma{}_2$ in order to efficiently compute $\tanh \sigma{}_i = (\tanh\N{\sigma{}_i})\hat{\sigma{}}_i$, and normalizing the argument to $\tanh^{-1}$ in order to efficiently compute the result.
% \begin{align}
% 	\bch{\sigma{}_1}{\sigma{}_2} = \sigma{}_3 = \tanh^{-1}\qty(\frac{
% 	\tanh \sigma{}_1 + \tanh \sigma{}_2 + \tanh \sigma{}_1 \times{} \tanh \sigma{}_2
% 	}{
% 	1 + \tanh \sigma{}_1 \cdot{} \tanh \sigma{}_2 + \tanh \sigma{}_1 \wedge{} \tanh \sigma{}_2 
% 	})
% .\end{align}

\subsection{Specialisation in low dimensions}

It is illustrative to see how the \x{BCH} formula \eqref{eqn:geometric-BCH} reduces in the two- and three-dimensional special cases.

\subsubsection{2D: The Euclidean and hyperbolic plane}

In two dimensions, all bivectors are scalar multiples of $\vol = \ve1\ve2$, and we recover the trivial case $e^ae^b = e^{a+b}$. %$\bch{a}{b} = a + b$.
Specifically, in the Euclidean $\GA(2)$ plane (or anti-Euclidean $\GA(0,2)$ plane) we have $\vol^2 = -1$, and equation \eqref{eqn:geometric-BCH} simplifies by way of the tangent angle addition identity
\begin{align}
	\tan^{-1}\qty(\frac{\tan \theta_1 + \tan \theta_1}{1 - \tan \theta_1 \tan \theta_2}) = \theta_1 + \theta_2
.\end{align}
This identity encodes how angles add when given as the gradients of lines; $m = \tan \theta$.

Similarly, in the hyperbolic plane $\GA(1,1)$ with basis $\qty{\ve+, \ve-}, \ve\pm^2 = \pm1$, the pseudoscalar $\vol = \ve+\ve-$ generates \emph{hyperbolic} rotations $e^{\xi{}\vol} = \cosh \xi{} + \vol\sinh \xi{}$ owing to the fact that $\vol^2 = -\ve+^2\ve-^2 = +1$.
Then, formula \eqref{eqn:geometric-BCH} simplifies by the hyperbolic angle addition identity
\begin{align}
	\tanh^{-1}\qty(\frac{\tanh \xi{}_1 + \tanh \xi{}_1}{1 + \tanh \xi{}_1 \tanh \xi{}_2}) = \xi{}_1 + \xi{}_2
\end{align}
which encodes how collinear rapidities add when given as relativistic velocities; $\beta{} = \tanh \xi{}$.

\subsubsection{3D: Rodrigues' rotation formula}

Less trivially, a rotation in $\RR^3$ by $\theta$ may be represented by its \emph{Rodrigues vector} $\vb r = \vb{\hat{r}}\tan\frac\theta2$ pointing along the axis of rotation.
The composition of two rotations is then succinctly encoded in \emph{Rodrigues' rotation formula}
\begin{align}
	\vb r_{12} = \frac{\vb r_1 + \vb r_2 - \vb r_1 \times{} \vb r_2}{1 - \vb r_1 \cdot{} \vb r_2}
	\label{eqn:rodrigues-formula}
\end{align}
involving the standard vector dot and cross products.

We can easily derive \eqref{eqn:rodrigues-formula} as a special case of \eqref{eqn:geometric-BCH-products} as follows:
Let $\sigma{}_1, \sigma{}_2 \in \GA[2](3)$ be two bivectors defining the rotors $e^{\sigma{}_1}$ and $e^{\sigma{}_2}$ in three dimensions.
In $\GA(3)$, the only $4$-vector is trivial, so $\sigma{}_1 \wedge{} \sigma{}_2 = 0$ and for the composite rotor $e^{\sigma{}_3} \coloneqq e^{\sigma{}_1}e^{\sigma{}_2}$ we have
\begin{align}
	\sigma{}_3 = \bch{\sigma{}_1}{\sigma{}_2} = \tanh^{-1}\qty(\frac{
	\tanh \sigma{}_1 + \tanh \sigma{}_2 + \tanh \sigma{}_1 \times \tanh \sigma{}_2
	}{
	1 + \tanh \sigma{}_1 \cdot{} \tanh \sigma{}_2
	})
\end{align}
where $a \times{} b$ is the commutator product of bivectors as in \eqref{eqn:bivector-products}, not the vector cross product.
Observe that Euclidean bivectors $\sigma{}_i \in{} \GA[2](3)$ have negative square (e.g., $(\ve1\ve2)^2 = -\ve1^2\ve2^2 = -1$) and that they are related to their dual normal vectors $\vb{u}_i$ by $\sigma{}_i = \vb{u}_i\vol$.
Therefore, by rewriting
\begin{math}
	\tanh \sigma{}_i
	= \tanh (\vb{u}_i\vol)
	= (\tan \vb{u}_i)\vol
,\end{math}
we obtain the formula in terms of plain vectors and the vector cross product.
\begin{align}
	\vb{u}_{12} = (\bch{\vb{u}_1\vol}{\vb{u}_2\vol})\vol^{-1}
	= \tan^{-1}\qty(\frac{
	\tan \vb{u}_1 + \tan \vb{u}_2 - \tan \vb{u}_1 \times \tan \vb{u}_2
	}{
	1 - \tan \vb{u}_1 \cdot{} \tan \vb{u}_2
	})
\end{align}
Indeed, a bivector $\sigma{}_i = \vb{u}_i\vol$ generates an $\RR^3$ rotation through an angle $\theta = 2\|\vb{u}_i\|$ via the double-sided transformation law
\begin{math}
	a \mapsto e^{\vb{u}\vol}ae^{-\vb{u}\vol}
.\end{math}
Hence, $\tan \vb{u}_i = \vb{\hat{v}}_i\tan\frac\theta2 \equiv{} \vb{r}_i$ are exactly the half-angle Rodrigues vectors and we recover \eqref{eqn:rodrigues-formula}.
The necessity of the half-angle in the Rodrigues vectors reflects the fact that they actually generate \emph{rotors}, not rotations directly, and so belong in the underlying spin representation of $\SO(3)$ --- a fact made clearer in the context of geometric algebra.

\subsection{In higher dimensions}

In fewer than four dimensions, the $4$-vector $\Ta1\wedge\Ta2 = 0$ appearing in the geometric \x{BCH} formula \eqref{eqn:geometric-BCH-products} is trivial, and so \eqref{eqn:geometric-BCH} involves only bivector addition and scalar multiplication.
In four dimensions, there is one linearly independent $4$-vector --- the pseudoscalar --- which necessarily commutes with all even multivectors.
However, in more than four dimensions, $4$-vectors do \emph{not} necessarily commute with bivectors, and the assumptions underlying \eqref{eqn:arev-and-srev-parts} and hence the main result \eqref{eqn:geometric-BCH} fail.

On the face of it, the \x{BCH} formula \eqref{eqn:geometric-BCH} in the four-dimensional case appears deceptively simple --- it hides complexity in the calculation of the trigonometric functions
\begin{align}
	\tanh \sigma{}_i &= \sigma{} - \frac13\sigma{}^3 + \frac{2}{15}\sigma{}^5 + \cdots
&	&\text{and}
&	\tanh^{-1} \sigma{}_i &= \sigma{} +\frac13\sigma{}^3 + \frac15\sigma{}^5 + \cdots
	\label{eqn:trig-power-series}
\end{align}
of arbitrary bivectors.
In fewer dimensions, $\sigma{}^2$ is a scalar, and so these power series are as easy to compute as their real equivalents (if $N_\sigma{} \in \RR$ satisfies $\sigma{}^2 = N_\sigma{}^2$ then $\tanh \sigma{} = (\tanh N_\sigma{})N_\sigma{}^{-1}\sigma{}$).
But in four dimensions, $\sigma{}^2$ is in general a $\qty{0,4}$-multivector (by lemma~\ref{lem:grades-of-square} of appendix~\ref{apx:geoalg}) and the power series \eqref{eqn:trig-power-series} are more complicated.
However, if $\sigma{}^2 \ne 0$ has a square root $N_\sigma{} = \alpha{} + \beta{}\vol$ in the scalar-pseudoscalar plane, then one has $\sigma{} = N_\sigma{}\hat{\sigma{}} = \hat{\sigma{}}N_\sigma{}$ where $\hat{\sigma{}} \coloneqq \sigma{}/N_\sigma{}$ is `normalized' so that $\hat{\sigma{}}^2 = 1$.
With a bivector $\sigma{} = N_\sigma{}\hat{\sigma{}}$ expressed in this form, the valuation of a formal power series $f(z) = \sum_{n=1}^\infty f_n z^n$ simplifies to
\begin{subalign}[\label{eqn:normalized-power-series}]
	\text{($f$ even)}&
&	f(\sigma{}) &= \sum_{n = 1}^\infty f_{2n} \sigma{}^{2n}
	= \sum_{n = 1}^\infty f_{2n} N_\sigma{}^{2n}
	= f(N_\sigma{})
,\\	\text{($f$ odd)}&
&	f(\sigma{}) &= \sum_{n = 1}^\infty f_{2n + 1} \sigma{}^{2n + 1}
	= \sum_{n = 1}^\infty f_{2n} N_\sigma{}^{2n + 1} \hat{\sigma{}}
	= f(N_\sigma{})\hat{\sigma{}}
.\end{subalign}
This is especially useful in the case of Minkowski spacetime $\GA(1,3)$ because the scalar-pseudoscalar plane $\GA[0,4](1,3) \cong \CC$ is isomorphic to the complex plane, and therefore a square root of $\sigma{}^2$ always exists.
Furthermore, complex square roots and complex trigonometric functions are easily computable.
From now on, we focus on the special case of Minkowski spacetime, considering its practical and theoretical application.

\section{The Algebra of Spacetime}

\emph{Spacetime algebra} (STA) is the name given to the geometric algebra of Minkowski space, $\GA(\RR^4, \eta{}) \equiv{} \GA(1,3)$, where $\eta{} = \pm\diag(-1,+1,+1,+1)$.
Introductory material on the STA can be found in \cite{hestenes2003sta,gull1993sta,dressel2015sta}.

Denote the standard vector basis by $\qty{\vb \gamma{}_\mu{}}$, where Greek indices run over $\qty{0,1,2,3}$.
(This is a deliberate allusion to the Dirac $\gamma{}$-matrices, whose algebra is isomorphic to the STA --- however, the $\vg \mu{}$ of STA are real, genuine spacetime vectors.)
A basis for the entire STA is then
\begin{align*}
	\overset{\text{scalars}}{\qty{\vb 1}}
\cup{}	\overset{\text{vectors}}{\qty{\vg 0, \vg i}}
\cup{}	\overset{\text{bivectors}}{\qty{\vg 0\vg i, \vg j\vg k}}
\cup{}	\overset{\substack{\text{trivectors}}}{\qty{\vg 0\vg j\vg k, \vg 1\vg 2\vg 3}}
\cup{}	\overset{\substack{\text{pseudoscalar}}}{\qty{\vol :={} \vg0\vg1\vg2\vg3}}
\end{align*}
where Latin indices range spacelike components, $\qty{1,2,3}$.
Multivectors constructed in a basis-invariant manner are manifestly Lorentz-invariant quantities.

% THE SCALAR--PSEUDOSCALAR PLANE
The right-handed unit pseudoscalar $\vol$ represents an oriented volume element and satisfies $\vol^2 = -1$.
This is one way in which complex structure arises within the real STA.
The scalar--pseudoscalar plane $\GA[0,4](1,3) = \spanof[\RR]{1, \vol}$ is algebraically isomorphic to the complex plane $\CC$, and so for the sake of computation, $\qty{0, 4}$-multivectors may be simply regarded as complex numbers.
In particular, we define the principal root $\sqrt{a}$ of a $\qty{0,4}$-multivector $a \in \GA[0,4](1,3)$ in the same way as it is defined in $\CC$ with a branch cut at $\theta = \pi{}$.
It is worth emphasising that there are many square roots of $-1$ in the spacetime algebra, with distinct geometrical meanings.
(For instance, a spacelike bivector $(\vg{i}\vg{j})^2 = -1$ represents a directed spacelike plane.)
We have chosen to define ``$\sqrt{\phantom{a}}$'' in such a way that $\sqrt{-1} = \vol$ is singled out as the principal root, as this proves to be useful.\footnote{
	% In interpreting standard physical formulae through the lens of geometric algebra, it is often enlightening to determine which particular geometric root of $-1$ is implied by a generic `$i$'.
	% For electromagnetism, the imaginary unit $i$ most often has the geometrical interpretation of the pseudoscalar $\vol$, where both play a role similar to the Hodge dual \cite{dressel2015sta}; see equation~\eqref{eqn:bivector-spacetime-split}.
	Especially in electromagnetic theory, the imaginary unit $i$ often has the geometrical interpretation of the pseudoscalar $\vol$, as in equation~\eqref{eqn:bivector-spacetime-split}, where both $i$ and $\vol$ play a role similar to the Hodge dual \cite{dressel2015sta}.
	In these cases, $\vol$ is ``the'' principal root of $-1$.
}

\subsection{The space/time split}

% THE SPACE/TIME SPLIT
While we actually live in $\RR^{1,3}$ spacetime, to any particular observer it appears that space is $\RR^3$ with a separate scalar time parameter.
This is reflected in the fact that $\GA[+](1,3)$ and $\GA(3)$ are isomorphic\footnote{
	An isomorphism of geometric algebras is a linear map $\varphi{}$ respecting the geometric product $\varphi{}(ab) = \varphi{}(a)\varphi{}(b)$.
	Other operations (such as reversion and grade projection) are not necessarily preserved.
} --- in fact there is a distinct isomorphism for each distinct inertial frame's observed spacetime split.
A \emph{space/time split} enables spacetime multivectors to be represented in a frame-dependent manner as $\GA(3)$ multivectors, and is performed as follows.

Suppose $K$ is an inertial observer, and for simplicity choose the standard basis $\qty{\vg \mu{}}$ so that $\vg 0$ is the instantaneous velocity of the $K$ frame.
There is an associated set of \emph{relative vectors} $\vg i\vg 0 \cong{} \vs i$ which form a vector basis for $\GA(3)$ specific to the $K$ frame.\footnote{
	Explicitly, there is an isomorphism $\GA[+](1,3) \cong{} \GA(3)$ constructed in this way for each timelike vector $\vb{v}{} \in{} \GA[1](1,3)$.
	Each isomorphism `splits' even spacetime multivectors into time and space components as observed in the inertial frame with velocity $\vb{v}{}$, providing an efficient, purely algebraic method for switching between inertial frames \cite{hestenes2003sta}.
	Read ``$A \cong B$'' with the understanding that $A \in \GA[+](1,3)$ and $B \in \GA(3)$ are equal under such an isomorphism.
}
% For instance, a spacetime vector $\vb X = X^\mu{}\vg \mu{}$ may be split into time and space components relative to the $K$ frame by multiplying by the frame's proper velocity $\vg 0$.
% \begin{align}
% 	\vb X\vg 0 = X^0 + X^i\vg i\vg 0 \cong t + \vb x
% \end{align}
% Thus, any equation involving $\vb X \in \RR^4$ written in terms of the geometric product gives rise to an equivalent frame-dependent equation involving the components $t$ and $\vb x \in \RR^3$.
% The spacetime vector $\vb X$ is algebraically isomorphic to a $\qty{0,1}$-multivector in $\GA(3)$ consisting of a scalar time coordinate\footnotemark\ $t$ and an $\RR^3$ vector $\vb x$.
For example, with respect to the $K$ frame, a spacetime bivector $F = F^{\mu{}\nu{}}\vg \mu{}\vg \nu{}$ may be separated into timelike $F^{i0}$ and spacelike $F^{ij}$ components and viewed as a $\qty{1,2}$-multivector in $\GA(3)$.
\begin{align}
	F = F^{i0}\vg i\vg 0 + F^{ij}\vg i\vg j
	\cong{} E^i\vs i + B^i\vol\vs i = \vb{E}{} + \vol \vb{B}{}
	\label{eqn:bivector-spacetime-split}
\end{align}
Note that
\begin{math}
	\vg i\vg j
	= (\vg i\vg 0)(\vg j\vg 0)
	\cong{}
	\vs i\vs j
	= \epsilon{}_{ij}{}^k\vol\vs k
\end{math}
where $\vol = \vs1\vs2\vs3$ also denotes the $\GA(3)$ pseudoscalar.\footnote{
	We temporarily assume $\vg0^2 = 1$ for illustration, but of course either metric signature is suitable.
}
This is precisely the frame-dependent decomposition of a spacetime bivector (or ``2-form'') into two $\RR^3$ vectors familiar from electromagnetic theory.

% LORENTZ ROTORS
A proper orthochronous Lorentz transformation $\linmap \Lambda{} \in \SO^+(1,3)$ is represented by a rotor $e^\sigma{} \in \Spin^+(1,3)$, which is in turn generated by a spacetime bivector $\sigma{} \in \GA[2](1,3)$.
The bivector $\sigma{} \in \GA[2](1,3)$ in the form of \eqref{eqn:bivector-spacetime-split} is
\begin{align}
	\sigma{} = \frac12(\xi{}^i \vg i + \vol \theta^i \vg i) \wedge \vg 0
	\cong \frac12(\vb \xi{} + \vol \vb \theta)
	% \sigma{} = \vg 0\wedge	\frac{\vb \xi{}}{2} + \frac{\vb \theta}{2}\vg i\vg j
	\label{eqn:bivector-generator}
,\end{align}
where $\xi{}^i$ and $\theta^i$ are triplets of values representing the three rapidities and three angles which characterise the Lorentz transformation in the $K$ frame.
The rightmost equality shows a space/time split into a rapidity vector $\vb \xi{} \in \RR^3$ and rotation bivector~$\vol\vb \theta$.

The geometric \x{BCH} formula \eqref{eqn:geometric-BCH} as written only involves the geometric product, and so lifts into $\GA(3)$ identically.
(However, \eqref{eqn:geometric-BCH-products} does not translate directly, since the grade-dependent products are not preserved by the isomorphism.)

\subsection{The invariant bivector decomposition}
\label{sec:invariant-bivector-decomposition}

\toself{Might be best to use $F$ consistently as the symbol for a generic spacetime bivector instead of `$\sigma{}$' for ``generator''.}

Spacetime bivectors $\sigma{} \in \GA[2](1,3)$ may always be normalized, in the sense that there exists some $N_\sigma{} \in \GA[0,4](1,3)$ such that
\begin{align}
	\sigma{} = N_\sigma{}\hat{\sigma{}} = \hat{\sigma{}}N_\sigma{}
	\qqtext{where}
	\hat{\sigma{}}^2 = 1
,\end{align}
except in the case $\sigma{}^2 = 0$, where we let $\hat{\sigma{}}^2 = 0$ instead.
This is because the square of a spacetime bivector $\sigma{}^2 = \alpha{} + \vol \beta{} = \rho{}^2e^{2\vol \phi{}}$ always possesses a $\qty{0, 4}$-multivector principal root
\begin{align}
	N_\sigma{}
	\coloneqq \sqrt{\sigma{}^2}
	= \rho{}e^{\vol \phi{}}
	\label{eqn:spacetime-bivector-normalizer}
,\end{align}
assuming without loss of generality that $\rho{} > 0$ and $\phi{} \in (-\pi{}/2, \pi{}/2]$.
The \emph{invariant bivector decomposition}
\begin{align}
	\sigma{} = \rho{}e^{\vol \phi{}}\hat{\sigma{}}
	&= \underbrace{(\rho{}\cos \phi{})\hat{\sigma{}}}_{\sigma{}_+} + \underbrace{(\rho{}\sin \phi{})\vol\hat{\sigma{}}}_{\sigma{}_-}
	% &= (\rho{}\cos \phi{})\hat{\sigma{}} + (\rho{}\sin \phi{})\vol\hat{\sigma{}}
% \\	&\eqqcolon \sigma{}_+ + \sigma{}_-
\end{align}
also defined in \cite[\textsection{}\,5.4.1]{doran2003ga} and \cite{hestenes2003sta} separates $\sigma{}$ into commuting parts $[\sigma{}_+, \sigma{}_-] = 0$ each of which satisfy $\pm{}\sigma{}_\pm{}^2 > 0$.

This decomposition makes clear the non-injectivity of the exponential map.
For instance, each bivector in the family
\begin{math}
	\sigma{}_{n} = \lambda{}_+\hat \sigma{} + (\lambda{}_- + n\pi{})\vol\hat \sigma{}
	\label{eqn:equivalent-generators}
\end{math}
generates the same Lorentz rotor up to an overall sign,
\begin{align}
	e^{\sigma{}_{n}} = e^{\sigma{}_{0}}e^{n\pi{}\vol\hat \sigma{}} = (-1)^ne^{\sigma{}_{0}}
	\label{eqn:equivalent-rotors}
,\end{align}
and all such rotors correspond to the same Lorentz transformation of vectors.
The equivalence~\ref{eqn:equivalent-rotors} shows that every Lorentz rotor $\pm{}e^{\sigma{}_0}$ is equal to a pure bivector exponential $e^{\sigma{}_n}$ with a shifted rotational part $\lambda{}_- \mapsto \lambda{}_- + n\pi$.
The \x{BCH} formula \eqref{eqn:geometric-BCH} discards overall sign, so assuming $|\lambda{}_-| \leq{} \pi{}$ then $\ln(e^{\sigma{}_{n}}) = \sigma{}_{0}$ using the standard branch cut $\qty|\IM{\tanh^{-1} z}| \leq{} \frac{\pi{}}{2}$.

\subsection{The BCHD formula in Minkowski spacetime}

Because the geometric \x{BCH} formula is constructed from sums and products of bivectors, it involves only even spacetime multivectors.
Therefore, in numerical applications, it is not necessary to represent the full STA, but only the even sub-algebra $\GA[+](1,3) \cong \GA(3)$.
The algebra of physical space $\GA(3)$ admits a faithful complex linear representation by the Pauli spin matrices \cite{baylis1989sta-pauli,lasenby2016ga-unified-language,hestenes2003sta}.
The real dimension of both $\CC^{2\times{}2}$ and $\GA(3)$ is eight, so there is no redundancy in the Pauli representation, so it is convenient for computer implementation.

An even $\GA[+](1,3)$ multivector --- or equivalently, a general $\GA(3)$ multivector --- may be parametrised by four complex scalars $q^\mu{} = \RE{q^\mu{}} + i\IM{q^\mu{}} \in \CC$ as
\begin{align}
	A = \RE{q^0} + \RE{q^i}\vs i + \IM{q^i}\vol\vs i + \IM{q^0}\vol
,\end{align}
where the $\vs i$ may be read both as spacetime bivectors $\vs i \equiv \vg 0\vg i \in \GA[+](1,3)$ or as basis vectors of $\GA(3)$ under a space/time split.
The Pauli matrices $\sigma{}_i \in{} \CC^{2\times{}2}$ form a linear representation of $\GA(3)$ by the association $\vs i \equiv \sigma{}_i$.
Explicitly, identifying
\begin{align}
	\vs1 &\equiv \mqty[
		 0&+1\\
		+1& 0\\
	]
&	\vs2 &\equiv \mqty[
		 0&-i\\
		+i& 0\\
	]
&	\vs3 &\equiv \mqty[
		+1& 0\\
		 0&-1\\
	]
\end{align}
along with $1 \equiv I$ and $\vol \equiv iI$ where $I$ is the identity matrix, we obtain a representation of the multivector $A$ by a $2 \times{} 2$ Hermitian matrix:
\begin{align}
	\linmap A \equiv \mqty[
		q^0 + q^3 & q^1 - iq^2 \\
		q^1 + iq^2 & q^0 - q^3
	]
	\label{eqn:multivector-as-matrix}
.\end{align}

A proper Lorentz transformation $\linmap \Lambda{} \in \SO^+(1,3)$ is determined in the $K$ frame by a vector rapidity $\vb \xi{} \in \RR^3$ and axis-angle vector $\vb \theta \in \RR^3$.
The standard $4\times{}4$ matrix representation of $\linmap \Lambda{}$ is obtained as the exponential of the generator
\newcommand{\zer}{\phantom\pm0\phantom{^i}} % zero with nice alignment
\begin{align}
	\mqty[
		 0 &  \vb \xi{}^\intercal \\
		\vb \xi{} & \epsilon{}_{ijk}\theta^k
	] =
	\left[\begin{array}{c|ccc}
	\zer & \xi{}^1 & \xi{}^2 & \xi{}^3 \\
	\hline
	\xi{}^1 & \zer & +\theta^3 & -\theta^2 \\
	\xi{}^2 & -\theta^3 & \zer & +\theta^1 \\
	\xi{}^3 & +\theta^2 & -\theta^1 & \zer
	\end{array}\right]
	\in \so(1,3)
	\label{eqn:lorentz-generator-matrix}
.\end{align}
In the spin representation, the transformation $\linmap \Lambda{}$ corresponds to a rotor $\rotor L = e^\sigma{}$, and the generating bivector \eqref{eqn:bivector-generator} may be expressed via \eqref{eqn:multivector-as-matrix} as the traceless complex matrix
\begin{align}
	\linmap \Sigma{} = q^k\sigma{}_k = \mqty[
		+q^3 & q^1 - iq^2 \\
		q^1 + iq^2 & -q^3
	]
	\label{eqn:bivector-as-matrix}
,\end{align}
where $q^k \coloneqq \frac12(\xi{}^k + i\theta^k) \in \CC$.
% This represents the restriction of \eqref{eqn:multivector-as-matrix} to spacetime bivectors.
Note that, since the square of a spacetime bivector is a $\qty{0,4}$-multivector, its representative matrix $\linmap \Sigma{}$ squares to a complex scalar multiple of the identity.

Given two generators $\sigma{}_i$ with matrix representations $\linmap \Sigma{}_i$, the geometric \x{BCH} formula \eqref{eqn:geometric-BCH} reads in terms of matrix operations,
\begin{align}
	\linmap \Sigma{}_3 :={} \bch{\linmap \Sigma{}_1}{\linmap \Sigma{}_2} = \tanh^{-1}\qty(
		\linmap{\frac{ T_1 + T_2 + A }{ I + S }}
	)
	\label{eqn:matrix-BCH}
,\end{align}
where $\linmap T_i :={} \tanh \linmap \Sigma{}_i$.
To efficiently compute $\linmap T_i$, make use of the fact that $\linmap \Sigma{}_i^2 = \lambda{}_i^2\linmap I$ where $\lambda{}_i \in \CC$ and evaluate $\linmap T_i = (\tanh \lambda{}_i)\lambda{}_i^{-1}\linmap \Sigma{}_i$.
In the null case $\linmap \Sigma{}_i^2 = \lambda{} = 0$, we have trivially $\tanh \linmap \Sigma{}_i = \linmap \Sigma{}_i = \tanh^{-1} \linmap \Sigma{}_i$.

The commutator $\linmap A :={} \frac12[\linmap T_1, \linmap T_2]$ and anti-commutator $\linmap S :={} \frac12\qty{\linmap T_1, \linmap T_2}$ terms may be efficiently computed by separating the single matrix product $\linmap{ \Pi{} :={} T_1T_2 = A + S }$ into off-diagonal and diagonal components, respectively; \emph{i.e.},
\begin{align}
	\linmap A_{ij} = (1 - \delta{}_{ij})\linmap \Pi{}_{ij}
	\qqtext{ and }
	\linmap S_{ij} = \delta{}_{ij}\linmap \Pi{}_{ij}
.\end{align}
The numerator of \eqref{eqn:matrix-BCH} is therefore a matrix with zeros on the diagonal, and the denominator is a complex scalar multiple of the identity, so the argument of $\tanh^{-1}$ (call it $\linmap M$) is of the form \eqref{eqn:bivector-as-matrix}.
Computing $\tanh^{-1} \linmap M$ again simply amounts to
\begin{math}
	\linmap \Sigma{}_3 = \tanh^{-1} \linmap M = (\tanh^{-1} \lambda{})\lambda{}^{-1}\linmap M
\end{math}
where $\linmap M^2 = \lambda{}^2\linmap I$.
The Lorentz generator in the standard vector representation \eqref{eqn:lorentz-generator-matrix} can then be recovered from $\linmap \Sigma{}_3$ with the relations $\xi{}^k = 2\RE{q^k}$ and $\theta^k = 2\IM{q^k}$, and the final $\SO^+(1,3)$ vector transformation is its $4\times{}4$ matrix exponential.

\section{Composition of Relativistic 3-velocities and the Wigner Angle}

As an example of its theoretical utility, we shall use the geometric \x{BCH} formula to derive the composition law for arbitrary relativistic $3$-velocities.
The innocuous problem of composing relativistic velocities has been called ``paradoxical'' \cite{ungar1989sr-velocity-composition,mocanu1992sr-velocity-composition,visser2011sr-velocity-composition}, owing in part to the fact that \emph{irrotational} boosts are not closed under composition, and that it is difficult to make sense of this additional complexity by representing the general composition in explicit matrix form.
Of course, there is no paradox, and the full description of the composition of boosts is pedagogical as it highlights aspects of special relativity which differ from common intuition.

Given an inertial frame $K$, we may speak of \emph{pure rotations} or \emph{pure boosts} relative to the $K$ frame (a pure rotation or pure boost relative to $K$ is \emph{not} pure in all other frames).
The restriction of the \x{BCH} formula to pure boosts is not as simple as the restriction to rotations \eqref{eqn:rodrigues-formula}, because pure boosts do not form a closed subgroup of $\SO^+(1,3)$ like pure rotations do.
Instead, the composition of two pure boosts $\rotor B_i$ is a pure boost composed with a pure rotation (or vice versa),
\begin{align}
	\rotor B_1\rotor B_2 = \rotor B\rotor R
	\label{eqn:boost-boost-makes-boost-rotation}
.\end{align}
The direction of the boost $\rotor B$ lies within the plane defined by the boost directions of $\rotor B_1$ and $\rotor B_2$, and $\rotor R$ is a rotation through this plane by the \emph{Wigner angle} \cite{visser2011sr-velocity-composition}.
Applying \eqref{eqn:geometric-BCH} to this case immediately yields formulae for the resulting boost and rotation.
These results are isomorphic to those in \cite{berry2020quat-sr} which are formulated using complexified quaternions.

For ease of algebra, we conduct the following analysis under a space/time split with respect to the $K$ frame.
Under this split, a pure boost $\rotor B$ is generated by an $\RR^3$ vector $\frac{\vb \xi{}}{2}$, and a pure rotation $\rotor R$ is generated by an $\RR^3$ bivector $\frac\theta2\hat{r}$.
Here, $\vb \xi{} \in{} \GA[1](3)$ is the \emph{vector rapidity}, related to the velocity by $\vb v/c = \vb \beta{} = \tanh \vb \xi{}$, and the rotation is through an angle $\theta$ in the plane spanned by the bivector $\hat{r} \in{} \GA[2](3)$.
Equation \eqref{eqn:geometric-BCH} with two pure boosts $\vb \xi{}_1$ and $\vb \xi{}_2$ is
\begin{align}
	\tanh(\bch{\frac{\vb \xi{}_1}{2}}{\frac{\vb \xi{}_2}{2}})
	= \frac{\vb{w}{}_1 + \vb{w}{}_2 + \vb{w}{}_1 \wedge{} \vb{w}{}_2}{1 + \vb{w}{}_1\cdot{}\vb{w}{}_2}
	\label{eqn:boost-boost}
\end{align}
where $\vb{w}{}_i :={} \tanh\frac{\vb \xi{}_i}{2}$ are the \emph{relativistic half-velocities}, also defined in \cite{berry2020quat-sr,berry2021quat-sr}.
The generator \eqref{eqn:boost-boost} has vector and bivector (namely $\vb{w}{}_1 \wedge{} \vb{w}{}_2$) parts, indicating that the Lorentz transformation it describes is indeed some combination of a boost and a rotation.

Similarly, for an arbitrary pure boost and pure rotation, equation~\eqref{eqn:geometric-BCH} is
\begin{align}
	\tanh(\bch{\frac{\vb \xi{}}2}{\frac\theta2\hat{r}})
	= \frac{\vb{w}{} + \rho{} + \frac12[\vb{w}{}, \rho{}]}{1 + \vb{w}{}\wedge \rho{}}
	\label{eqn:boost-rotation}
\end{align}
where $\rho{} :={} \tanh \frac{\theta\hat{r}}2 = \hat{r}\tan\frac \theta2$ is a bivector.
In general, \eqref{eqn:boost-rotation} has vector, bivector \emph{and} pseudoscalar parts (the commutator $\frac12[\vb{w}{}, \rho{}] = \grade[1]{\vb{w}{}\rho{}} + \vb{w}{} \wedge \rho{}$ and the denominator both have grade-three part $\vb{w}{}\wedge \rho{}$).
However, \eqref{eqn:boost-boost} and \eqref{eqn:boost-rotation} are equal by supposition of \eqref{eqn:boost-boost-makes-boost-rotation}, and by comparing parts of equal grade, we deduce the pseudoscalar part of \eqref{eqn:boost-rotation} is zero.
This enforces $\vb{w}{}\wedge \rho{} = 0$, or equivalently, that $\vb{w}{}$ lies in the plane defined by $\rho{}$ --- meaning the resulting boost lies within the plane of Wigner rotation as expected.
Hence, for a coplanar boost and rotation, \eqref{eqn:boost-rotation} is simply
\begin{align}
	\tanh(\bch{\frac{\vb \xi{}}2}{\frac\theta2\hat{r}})
	= \vb{w}{} + \rho{} + \vb{w}{}\rho{}
	\label{eqn:boost-rotation-coplanar}
.\end{align}
The term $\vb{w}{}\rho{} = \grade[1]{\vb{w}{}\rho{}} = -\rho{}\vb{w}{}$ is a vector orthogonal to $\vb{w}{}$ in the plane defined by $\rho{}$.

Equating the bivector parts of \eqref{eqn:boost-boost} and \eqref{eqn:boost-rotation-coplanar} determines the rotation
\begin{align}
	\rho{} &= \frac{\vb{w}{}_1 \wedge \vb{w}{}_2}{1 + \vb{w}{}_1 \cdot{} \vb{w}{}_2}
,&	&\text{implying}
&	\theta = 2\tan^{-1}\qty(\frac{w_1w_2\sin\phi{}}{1 + w_1w_2\cos\phi{}})
\end{align}
where $\phi{}$ is the angle between the two initial boosts (in the $K$ frame).
The angle $\theta$ is precisely the Wigner angle.
Equating the vector parts determines the boost
\begin{align}
	\vb{w}{} &= \frac{\vb{w}{}_1 + \vb{w}{}_2}{1 + \vb{w}{}_1\cdot{}\vb{w}{}_2}(1 + \rho{})^{-1}
,\end{align}
noting that $\vb{w}{}_i$ and $\rho{}$ do not commute.
Substituting $\rho{}$ leads to the remarkably succinct composition law
\begin{math}
	\vb{w}{} = (\vb{w}{}_1 + \vb{w}{}_2)(1 + \vb{w}{}_1\vb{w}{}_2)^{-1}
\end{math}
exhibited in \cite{berry2020quat-sr}, with the final relativistic velocity being $\vb \beta{} = \tanh \vb \xi{} = \tanh (2\tanh^{-1} \vb{w}{})$.

\section{Conclusions}

In geometric algebras of dimension $p + q \leq{} 4$, orthogonal transformations $a \mapsto e^\sigma{}ae^{-\sigma{}} \in \SO^+(p, q)$ may be composed in terms of their generators using the geometric \x{BCH} formula \eqref{eqn:geometric-BCH}, which satisfies
\begin{align}
	e^{\sigma{}_1}e^{\sigma{}_2} = \pm e^{\bch{\sigma{}_1}{\sigma{}_2}}
	\label{eqn:BCH-application}
.\end{align}
This holds for bivectors $\sigma{}_i \in \GA[2](p, q)$, generalizing Rodrigues' formula --- but also for arbitrary $\qty{1,2}$-multivectors in $\GA(3)$, by exploiting the space/time split $\GA[+](1,3) \cong \GA(3)$.
Representing $\GA(3)$ by $2\times{}2$ complex matrices results a computationally efficient formula \eqref{eqn:matrix-BCH} for the composition of proper Lorentz transformations in terms of their generators.

The benefit of adopting geometric algebra here is the utility of the double-cover spin representation: the geometric \x{BCH} formula is simpler than previous results \cite{coll1990sr-generator-exponentiation,coll2002sr-generator-composition} formulated in terms of 2-forms.
It is sufficiently simple to be of theoretical use: it easily reduces to well-known formulae in lower dimensions, and yields the composition law \cite{berry2020quat-sr} for relativistic $3$-velocities and the associated Wigner angle \cite{berry2021quat-sr,visser2011sr-velocity-composition}.

\appendix
\section{Geometric Algebras in Physics}
\label{apx:geoalg}

For any real vector space with a metric $(V, \eta{})$, there is a unique \emph{geometric algebra} $\GA(V, \eta{})$.
``Geometric algebra'' is a synonym for \emph{Clifford algebra} $Cl(V, q)$ in the case that the vector space $V$ is real and is provided with a quadratic form.
The prescription of a quadratic form\footnote{
	A quadratic form $q : V \to \RR$ satisfies $q(\lambda{}\vb{u}) = \lambda{}^2q(\vb{u})$ and measures the (possibly negative) squared norm of a vector.
	The associated metric satisfying $\ip{\vb{u}}{\vb{u}} = q(\vb{u})$ is uniquely recovered from
	\begin{math}
		\ip{\vb{u}}{\vb{v}} = \frac12(q(\vb{u} + \vb{v}) - q(\vb{u}) - q(\vb{v}))
	.\end{math}
	
} $q$ is equivalent to a choice of metric $\eta{}$, but the notion of a metric is more common in physics (whereas the mathematical viewpoint often starts with $q$).
While for its pure mathematical study Clifford's name is retained, the name \emph{geometric algebra} emphasising its rich geometric interpretation is preferred in application to physics.

Succinctly put, $\GA(V, \eta{})$ is obtained by allowing vectors in $V$ to be multiplied freely to form objects of higher \emph{grade}, modulo the identification
\begin{align}
	\label{eqn:vector-square}
	\vb{u}^2 = \ip{\vb{u}}{\vb{u}} = \eta{}_{ab}u^au^b \in \RR
\end{align}
of the square of any vector with its scalar inner product.
This rule completely defines the \emph{geometric product} which we denote by juxtaposition.
The resulting $2^n$-dimensional algebra is \emph{graded}: as a vector space is isomorphic to the exterior algebra $\bigwedge(V)$ with a $\binom{n}{k}$-dimensional subspace for each grade $k$.
However, $\GA(V, \eta{})$ is a metric-dependent generalisation of the exterior algebra (and is not usually defined on the dual space $V^*$ as $p$-forms are).
Also, unlike $\bigwedge(V)$, objects of mixed grade in  $\GA(V, \eta{})$ play the extremely useful role of describing reflections and rotations in arbitrary dimensions.

By expanding $(\vb{u}{} + \vb{v}{})^2 = \langle\vb{u}{} + \vb{v}{}, \vb{u}{} + \vb{v}{}\rangle$, we immediately find
\begin{align}
	\langle\vb{u}{}, \vb{v}{}\rangle = \frac12(\vb{u}{}\vb{v}{} + \vb{v}{}\vb{u}{})
.\end{align}
So the symmetric part of a product of vectors is their inner product; a scalar, or grade zero quantity.
The antisymmetric part coincides with the alternating wedge product familiar to exterior algebra (only now defined on vectors, not co-vectors)
\begin{align}
	\vb{u}{} \wedge{} \vb{v}{} = \frac12(\vb{u}{}\vb{v}{} - \vb{v}{}\vb{u}{})
.\end{align}
This is a grade $2$ object, or bivector, dual to a $2$-form.
Therefore, for the geometric product of vectors we have the famous relation
\begin{align}
	\vb{u}{}\vb{v}{} = \langle\vb{u}{}, \vb{v}{}\rangle + \vb{u}{}\wedge{}\vb{v}{}
\end{align}
and it follows that parallel vectors commute and orthogonal vectors anticommute.

% Given an orthonormal basis $\qty{\vb{e}{}_i}$ of $V$, a basis of the entire algebra consists of vectors or $1$-blades $\vb{e}{}_i$; 2-blades $\vb{e}{}_i\vb{e}{}_j = -\vb{e}{}_j\vb{e}{}_i$; $k$-blades
% $\vb{e}{}_{i_1}\vb{e}{}_{i_2}\cdots\vb{e}{}_{i_k} = \epsilon{}^{j_1j_2\cdots j_k}\vb{e}{}_{i_{j_1}}\vb{e}{}_{i_{j_2}}\cdots\vb{e}{}_{i_{j_k}}$
We denote by $\GA(p, q)$ the geometric algebra over $V = \RR^{p,q}$, which then admits an orthonormal basis
\begin{math}
	\qty{\vb{e}{}^+_1, ..., \vb{e}{}^+_p, \vb{e}{}^-_1, ..., \vb{e}{}^-_q}
	\text{ with }
	\qty(\vb{e}{}^\pm{}_i)^2 = \pm{}1
.\end{math}
A basis of the entire algebra consists of
\begin{center}
\begin{tabular}{rl}
	vectors or $1$-blades & $\vb{e}{}_i, \quad i \in{} \qty{1, 2, \dots, n}$
\\	$2$-blades & $\vb{e}{}_i\vb{e}{}_j = -\vb{e}{}_j\vb{e}{}_i, \quad i \not ={} j$
\\	$\vdots$
\\	$k$-blades & $\vb{e}{}_{i_1}\vb{e}{}_{i_2}\cdots\vb{e}{}_{i_k} = \epsilon{}^{j_1j_2\cdots j_k}\vb{e}{}_{i_{j_1}}\vb{e}{}_{i_{j_2}}\cdots\vb{e}{}_{i_{j_k}}$
\\	$\vdots$
\end{tabular}
\end{center}
and so on up to the \emph{pseudoscalar} $\vol :={} \vb{e}{}_1\vb{e}{}_2\cdots \vb{e}{}_n$.

A \emph{$k$-vector} is a sum of $k$-blades, and a $k$-blade is a $k$-vector which is expressible as the wedge product of $k$ vectors.\footnote{The simplest example of a $2$-vector which is not a $2$-blade is $\vb{e}{}_1\vb{e}{}_2 + \vb{e}{}_3\vb{e}{}_4 \not ={} \vb{u}{}\wedge{}\vb{v}{}$.}
A general element of the algebra (of uniform or mixed grade) is called a (\emph{homogeneous} or \emph{inhomogeneous}) \emph{multivector}.
Finally, if the non-zero parts of a multivector $A$ have grade $k \in K \subseteq \NN$ for some set of grades $K$, we shall call $A$ a $K$-multivector.

\subsection{Fundamental dualities of a geometric algebra}

Linear operations such as the matrix transpose or complex and hermitian conjugates are useful because they preserve (or reverse) multiplication: they are (anti-)auto\-morphisms.
Geometric algebras possess two distinguished automorphisms:
\begin{itemize}
	\item \emph{Grade involution, $\iota{}$.}
	Reflection through the origin $\iota{}(\vb{u}) = -\vb{u}$ is an isometry and so extends to an algebra automorphism by the requirement $\iota{}(ab) = \iota{}(a)\iota{}(b)$.
	Its action on $k$-vectors is
	\begin{math}
		\iota{}(a) = (-1)^ka
	\end{math}
	and is defined on multivectors by linearity.

	\item \emph{Reversion, $\rev{\phantom{a}}$.}
	The \emph{reverse} flips the order of the geometric product,
	\begin{math}
		\rev{ab} = \rev{b}\rev{a}
	\end{math}
	(making it an anti-automorphism) and is the identity on vectors, $\rev{\vb{u}{}} = \vb{u}{}$.
	Explicitly, if $a$ is a $k$-blade, then $\rev{a} = s_ka$, where $s_k = \epsilon{}^{k\cdots 21} = (-1)^{\frac{(k - 1)k}2} = \pm1$ is the sign of the reverse permutation on $k$ symbols.
\end{itemize}
Grade involution fixes even-grade elements, which together form the \emph{even sub-algebra}
\begin{align}
	\GA[+](p, q) \coloneqq \qty{a \in \GA(p, q) \mid \iota{}(a) = a}
.\end{align}
The even sub-algebra is algebraically closed, and can generally be interpreted as the \emph{rotation algebra} for $\RR^{p,q}$.
Elements which are their own reverse $\rev{a} = a$ are sums of blades of grade $k \in \qty{4n, 4n + 1 \mid n \in \NN} = \qty{0, 1, 4, 5, 8, 9, \dots}$ only.

These operations are useful in practice.
In particular, the following result follows easily from reasoning about grades.
\begin{lemma}
	\label{lem:grades-of-square}
	If $A$ is a $k$-vector, then $A^2$ is a $4\NN$-multivector, \emph{i.e.}, a sum of blades of grade $\qty{0, 4, 8, \dots}$ only.
\end{lemma}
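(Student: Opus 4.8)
The plan is to pin down the grades of $A^2$ by intersecting two independent constraints: one coming from reversion, and one coming from the grade structure of products of homogeneous multivectors.

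First I would show that $A^2$ is self-reverse. Since $A$ is a $k$-vector, the reversion rule recalled above gives $\rev{A} = s_k A$ with $s_k = (-1)^{(k-1)k/2} = \pm 1$. Because reversion is an anti-automorphism,
\begin{align*}
	\rev{A^2} = \rev{A}\,\rev{A} = (s_k A)(s_k A) = s_k^2\, A^2 = A^2
,\end{align*}
using $s_k^2 = 1$. Hence $A^2$ equals its own reverse, and by the characterisation of self-reverse elements established above, $A^2$ is a sum of blades whose grades lie in $\qty{0, 1, 4, 5, 8, 9, \dots}$.

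Second, I would constrain the grades of $A^2$ by parity. The geometric product of a $j$-vector and a $k$-vector decomposes into grades $|j - k|, |j - k| + 2, \dots, j + k$ in steps of two; specialising to $j = k$ shows that $A^2$ carries only even grades $\qty{0, 2, 4, \dots, 2k}$. Intersecting the two constraints — the self-reverse grades $\qty{0, 1, 4, 5, 8, 9, \dots}$ against the even grades $\qty{0, 2, 4, 6, \dots}$ — leaves exactly the multiples of four, $\qty{0, 4, 8, \dots} = 4\NN$, which is the claim.

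The only step requiring care is the even-grade assertion, namely that the product of two homogeneous $k$-vectors contains only even grades. This is the standard grade-decomposition of products in a geometric algebra, but if one prefers a self-contained argument it follows by expanding $A$ in an orthonormal blade basis: each pairwise product $\ve{I}\ve{J}$ of two grade-$k$ basis blades equals $\pm \ve{K}$ where $K = I \triangle J$ is the symmetric difference, and $|I \triangle J| = 2k - 2|I \cap J|$ is always even. Thus every term of $A^2$ has even grade, and no genuine obstacle remains.
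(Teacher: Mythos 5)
Your proof is correct, and it follows the paper's strategy for half of the argument while substituting a genuinely different mechanism for the other half. The first constraint --- that $A^2$ is self-reverse, since $\rev{A^2} = (\rev{A})^2 = (s_k A)^2 = A^2$, hence has grades only in $\qty{0,1,4,5,8,9,\dots}$ --- is exactly the paper's first step. Where you diverge is the parity constraint. The paper gets it from the \emph{other} distinguished (anti-)automorphism it introduces alongside reversion, namely grade involution: $\iota(A^2) = \iota(A)^2 = \qty((-1)^k A)^2 = A^2$, so $A^2$ is fixed by $\iota$ and therefore of even grade --- a one-liner perfectly parallel to the reversion step. You instead invoke the grade decomposition of a product of homogeneous multivectors (grades $\abs{j-k}, \abs{j-k}+2, \dots, j+k$), specialised to $j = k$, and you rightly note this needs justification since it is nowhere stated in the paper in that generality (only the bivector case, the $\qty{0,2,4}$ decomposition, appears); your basis-blade argument via symmetric differences of index sets, with $\abs{I \triangle J} = 2k - 2\abs{I \cap J}$ even, closes that gap and is valid for any non-degenerate signature. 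The trade-off: the paper's route is shorter and more structural, exploiting the two (anti-)automorphisms in identical fashion with no basis computation; your route is more concrete and yields slightly more information as a by-product --- it bounds the grades of $A^2$ by $2k$ and establishes the generally useful fact that a product of two $k$-vectors carries only grades of fixed parity.
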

\begin{proof}
	The multivector $a^2$ is its own reverse, since $\rev{a^2} = (\rev{a})^2 = (\pm a)^2 = a^2$, and hence has parts of grade $\qty{4n, 4n + 1 \mid n \in \NN}$.
	Similarly, $a^2$ is self-involutive, since $\iota{}(a^2) = \iota{}(a)^2 = (\pm a)^2 = a^2$, and is thus of even grade.
	Therefore $a^2$ is a $\qty{0, 4, 8, ...}$-multivector.
\end{proof}

\subsection{Relationships to other common algebras}

Geometric algebra reproduces many of the useful algebraic structures found in physics.
Complex numbers are fit for describing $\SO(2)$ rotations; quaternions for $\SO(3)$ rotations in $\RR^3$; and work has been done with complexified quaternions $\CC\otimes\HH$ in describing Lorentz transformations \cite{berry2020quat-sr,berry2021quat-sr}.
All these algebras are isomorphic to an even geometric sub-algebra
\begin{align}
	\CC &\cong \GA[+](2)
,&	\HH &\cong \GA[+](3)
,&	\CC\otimes\HH &\cong \GA[+](1,3)
,\end{align}
where the role of conjugation is played by reversion.
Common to all these isomorphisms is the identification of each ``imaginary'' unit with a unit bivector $\ve{i}\ve{j}$.
In 2d, there is one linearly independent bivector, $\ve1\ve2$, and one imaginary unit, $i$.
Indeed, in 3d, there are $\binom{3}{2} = 3$ bivectors, and so three imaginary units $\qty{\vb i, \vb j, \vb k}$ are needed.

The interpretation of a bivector is clear: it generates a rotation in the oriented plane which it spans.
That imaginary units are best interpreted as bivectors (or ``axial'' vectors), and not as ordinary (``polar'') vectors, reveals some of the confusion that surrounds the quaternions \cite{chappell2016quat-history,altmann1989quat-history}.
It is only a happy (or misleading) coincidence that in $\RR^3$ vectors and bivectors can be interchanged---but not without sacrificing proper transformation behaviour.
This is also why the complex numbers do not represent vectors in an isotropic\footnote{In a way that treats all directions on equal footing.} way: $\CC$ does not contain ordinary vectors; it is the linear combination of one scalar $1$ and one bivector $i$.

Enlarging $\GA[+](p, q)$ to the full algebra $\GA(p, q)$ adds the missing polar 1-vectors along with other objects of odd grade.
Such an algebra describes vectors and rotations in a unified and isotropic way.
Physics has independently invented $\GA(p, q)$ in at least two instances in the form of the Pauli and Dirac matrix algebras.
The Pauli matrices $\qty{\sigma{}_1, \sigma{}_2, \sigma{}_3}$, satisfying $\qty{\sigma{}_i, \sigma{}_j} = 2\delta{}_{ij}I$ form a faithful complex linear representation of $\GA(3)$.
Likewise, their relativistic counterpart the Dirac matrices $\qty{\gamma{}_0, \gamma{}_1, \gamma{}_2, \gamma{}_3}$ are algebraically isomorphic to $\GA(1,3)$.
In both cases, hermitian conjugation in the matrix algebra corresponds to reversion in the geometric algebra \cite[\textsection{}\,5]{doran2003ga}.

\acknowledgments

Joseph Wilson was supported by a Victoria University of Wellington MSc scholarship, and was also indirectly supported by the Marsden Fund, via a grant administered by the Royal Society of New Zealand.
Matt Visser was directly supported by the Marsden Fund, via a grant
administered by the Royal Society of New Zealand.

Joseph would like to thank Peter Donelan of Victoria University of Wellington for his helpful input.

\addcontentsline{toc}{section}{Bibliography}

\bibliography{references}

\end{document}